\newcommand{\R}{\mathcal{R}}
\newcommand{\AutoAdjust}[3]{\mathchoice{ \left #1 #2  \right #3}{#1 #2 #3}{#1 #2 #3}{#1 #2 #3} }
\newcommand{\Xcomment}[1]{{}}
\newcommand{\InBrackets}[1]{\AutoAdjust{[}{#1}{]}}%
\newcommand{\Ex}[2][]{\operatorname{\mathbf E}_{#1}\InBrackets{#2}}
\newcommand{\Indic}[1]{\mathbf{1}_{#1}}
\newcommand{\spot}{s}
\newcommand{\reservation}{r}
\newcommand{\action}{a}
\newcommand{\actionreservestage}{\action^{\reservation}}
\newcommand{\val}{v}
\newcommand{\vali}{\val_i}
\newcommand{\valagent}{\vali}
\newcommand{\typeCDF}{F}
\newcommand{\typePDF}{f}
\newcommand{\valdist}{F}
\newcommand{\valCDF}{\valdist}
\newcommand{\spotCDF}{S}
\newcommand{\strat}{s}
\newcommand{\strats}{\mathbf{\strat}}
\newcommand{\budget}{b}
\newcommand{\budgeti}{\budget_{i}}
\newcommand{\budgetagent}{\budgeti}
\newcommand{\indifferencebudget}{\budget_{I}}
\newcommand{\mech}{M}
\newcommand{\mechspot}{\mech^{\spot}}
\newcommand{\mechreserves}{\mech^{\reservation}}
\newcommand{\mechres}{\mechreserves}
\newcommand{\mechspotandres}{\mech^{\spot+\reservation}}
\newcommand{\mechspotandresbench}{B^{\spot+\reservation}}
\newcommand{\price}{p}
\newcommand{\pricespot}{\price_{\spot}}
\newcommand{\pricereservation}{\price_{\reservation}}
\newcommand{\util}{u}
\newcommand{\utili}{\util_{i}}
\newcommand{\utilreservation}{\util^{\reservation}}
\newcommand{\utilres}{\utilreservation}
\newcommand{\utilspot}{\util^{\spot}}
\newcommand{\utilagent}{\utili}
\newcommand{\utiliabove}{\utilagent^{*}}
\newcommand{\reserverCDF}{T}
\newcommand{\totalreserved}{T}
\newcommand{\fracreservedatprice}{Y}
\newcommand{\supply}{q}
\newcommand{\supplyCDF}{Q}
\newcommand{\eff}{EFF}
\newcommand{\welfare}{WEL}
\newcommand{\revenue}{REV}
\newcommand{\rev}{\revenue}
\newcommand{\welfarei}{w_i}
\begin{document}

\title{On-demand or Spot? Selling the cloud to risk-averse customers}
\author{Darrell Hoy\inst{1}\thanks{Part of this work was completed while D. Hoy was an intern at Microsoft Research.} \and
Nicole Immorlica\inst{2}\and
Brendan Lucier\inst{2}
}

\institute{
	University of Maryland\\
	\and
	Microsoft Research}

\maketitle

\begin{abstract}
In Amazon EC2, cloud resources are sold through a combination of an on-demand market, in which customers buy resources at a fixed price, and a spot market, in which customers bid for an uncertain supply of excess resources.  Standard market environments suggest that an optimal design uses just one type of market.  We show the prevalence of a dual market system can be explained by heterogeneous risk attitudes of customers.  In our stylized model, we consider unit demand risk-averse bidders.  We show the model admits a unique equilibrium, with higher revenue and higher welfare than using only spot markets. Furthermore, as risk aversion increases, the usage of the on-demand market increases.  We conclude that risk attitudes are an important factor in cloud resource allocation and should be incorporated into models of cloud markets. \end{abstract}

\section{Introduction}

Cloud computing allows clients to rent computing resources over the internet
to perform a variety of computing tasks, from hosting simple web servers to computing complex financial models.  By offloading these tasks to the cloud, clients avoid the necessity of procuring and maintaining expensive servers and infrastructure.  %
The current market leader in this industry is Amazon who launched its cloud platform, Amazon Elastic Compute Unit (EC2), in 2006.  Amazon uses its cloud internally for many of its own computations.  Additionally, Amazon contracts with large clients who reserve instances of cloud resources for long usage periods.  Due to natural variation in the nature of computing tasks from Amazon and its large clients, EC2 has a varying amount of leftover computing resources.  Amazon sells these resources to small clients. %

This leads to a natural question: how should a cloud provider price its resources to these small clients?  The pricing model adopted by Amazon has two main components: an on-demand market and a spot market.  In the on-demand market, clients may buy an instance of cloud resources at a fixed reservation price.\footnote{This might more naturally be called a ``reservation market'' and we switch to this terminology in the remainder of the paper; however we stick to the term ``on-demand'' for the current discussion as this is the term used by Amazon.}  After resources have been allocated internally, to large clients, and to clients in the on-demand market, 
extra supply might still remain.  This supply is sold in the spot market.  In a spot market, clients place bids for instances, and a price is set so that the available supply equals the total demand at that price.

Viewed through the lens of microeconomic theory, the persistence of this dual market is a curiosity at first glance.  Indeed, in standard economic environments, a risk-neutral, expected-utility-maximizing client who desires a cloud resource should simply buy it in whichever market is expected to have the lower price -- typically the spot market.  This suggests all sales should happen in the spot market, leaving the on-demand market defunct.  

That this is not reflective of reality stems from several factors.  Most apparent is that clients are rarely risk-neutral.  For example, it is easy to imagine that a company would have a soft budget set aside for computational costs.  They would then spend freely within the confines of this budget, and extend the budget cautiously when necessary to meet their computing needs.  This type of behavior suggests a tendency towards risk aversion on the part of the clients.  As the budget is freely available, clients might prefer to ``overspend'' to guarantee the required resources at the on-demand price.

We show in a stylized setting that the presence of heterogenous risk attitudes can explain the prevalence of a combined on-demand and spot market.  Specifically, this dual market induces a unique equilibrium in which more risk-averse customers (e.g., those with higher budgets) buy resources in the on-demand market and the others bid in the spot market.  We show that this equilibrium outcome outperforms the outcomes achieved by running only one of the two types of markets on its own in many key objectives.

\subsection{Results and Techniques}

In order to highlight the impact of risk aversion on the market, we focus our analysis on a simple setting in which there is only one type of computational resource being sold (e.g., a server with one core for one hour), and each buyer demands only a single instance of this cloud resource at any given time\footnote{Of course, this model abstracts away from many reasonable sources of risk aversion in the cloud, such as clients with diminishing marginal returns for multiple instances, the cost of prematurely terminating a long-running task.  Even ignoring these factors, our model still generates heterogeneous preferences toward on-demand versus spot pricing.
}. 
Formally, we assume a continuum of buyers, where the type of a buyer consists of a value for an instance and 
a utility curve that maps outcomes (i.e., allocation and price paid) to payoffs.  
The utility curve describes a buyer's attitude toward risk: for example, a buyer with a soft budget (as described above) would likely prefer to spend all of their budget all the time than to spend twice their budget half the time, and this preference is captured by a non-linear utility curve.
The buyer types are described by a joint common prior.
We assume the market is large; i.e., no single buyer has significant impact on the market outcome. 

The market works as follows.  First, the seller sets a price for on-demand instances.  This price should be high enough to guarantee that supply exceeds demand, motivated by the fact that resources are always available for purchase in on-demand markets in practice.  Buyers then realize their types (i.e., value/budget pairs) and choose whether to buy in the on-demand market.  After these decisions have been made, the unsold supply receives an exogenous shock, modeling variation in the demand of large clients.  Any remaining supply is then sold to the remaining buyers at a market-clearing price.\footnote{Since our model abstracts away from inter-temporal effects, we do not explicitly model the impact of fluctuating spot prices and changes to on-demand prices over time.  Investigating a repeated-game model of the market, and/or agents with time-dependent preferences (e.g., minimizing the cost of a large job subject to a deadline), is left as a direction for future work.}

We prove that this system has a unique (subgame-perfect) equilibrium for each choice of the on-demand market price.  We do this by analyzing 
the relationship between the spot price distribution and the distributions of clients' types and corresponding supply and demand.  It turns out that the distribution over spot prices up to a certain value $v$ depends only on the distributions of clients' types in the range $[0,v]$, and hence one can explicitly solve for the price distribution recursively.

This equilibrium satisfies a monotonicity property: agents 
that are more risk-averse
are more likely to purchase in the on-demand market, whereas agents 
that are less risk-averse 
are more likely to use the spot market.  Furthermore, as the distribution shifts such that agents become more averse to risk (in the sense of first-order stochastic dominance), 
we show more clients end up buying instances in the on-demand market and the revenue correspondingly increases.  This result is perhaps intuitive, but it is not obvious: as clients 
become more averse to risk,
they shift towards the on-demand market and hence both decrease supply and demand in the spot market.  This in turn could cause the spot price to shift either up or down, which would impact purchasing decisions of all clients.  By further arguing about the 
equilibrium of the market,
we show the shift towards the on-demand market in fact causes the spot price to increase thereby reinforcing the shift towards the on-demand market.  Therefore, the equilibrium is monotone with such shifts in the value and budget distribution.  This further illustrates the connection between the on-demand market and risk attitudes.

We leverage our equilibrium characterization to compare the dual market outcome to the outcome of a spot-only or on-demand-only market.  We are interested in the welfare, efficiency, and revenue properties of these markets.  The revenue of a market outcome is simply the sum of the payments, and is equal to the cloud provider's utility.  The welfare of an outcome is the total utility of all market participants including the cloud provider.  The efficiency is the total value of the cloud clients, ignoring payments.  In risk neutral environments, the welfare and efficiency are equal, but with risk-averse clients the welfare can be less than the efficiency.  

It is easy to see that a spot-only market is more efficient than a dual market, which in turn is more efficient than an on-demand-only market.  
This is because the spot market precisely generates the efficient outcome, even with exogenous supply uncertainty: in a spot market, allocation is monotone in value and thus a lower-valued client is never served in place of the higher-valued one.

Surprisingly, these efficiency comparisons do not extend to welfare.  As we show, the welfare of the dual market is better than the welfare of the spot market alone, regardless of the price set for the on-demand instances.  In particular, this is true even when the on-demand market price is set to maximize the revenue of the cloud provider.  This is not trivial: the on-demand market adds inefficiency, since clients with high value but 
low aversion to risk
may not wish to purchase on-demand, whereas clients with lower value but higher 
risk-aversion
might.  This leads to circumstances where lower-valued clients win but higher-valued clients lose.  However, since the clients 
that are winning in this scenario are actually more risk-averse,
the transfer of payments to the cloud provider increases welfare.  We show that the welfare increase due to additional transfers from risk-averse clients offset any inefficiencies in the allocation.  Moreover, since this welfare comparison holds at every setting of the on-demand price, it applies in particular to the price that maximizes revenue.  We show this price must also generate more revenue than a spot-only market, leading to increases in welfare and revenue.

In summary, a dual spot/on-demand market simultaneously improves both the revenue and welfare of a spot-only market.  We also show by example that while an on-demand market is revenue-optimal for risk-neutral buyers, a dual market can generate strictly higher revenue when buyers are risk-averse.  Furthermore, while a dual market may sometimes generate less revenue than an on-demand-only market, a dual market is always more efficient.  This suggests that a cloud provider, especially one that holds a dominant position in the marketplace, might prefer a dual market system.  This phenomenon is driven by heterogeneous risk attitudes that arise naturally in the context of cloud computation, leading us to posit that risk aversion is an important element to consider when one models the cloud marketplace.
\subsection{Related Work}

A number of papers explore cloud-computing market design. \citet{Zal13} consider designing a truthful auction where uncertainty lies in the arrival of demand and value profiles of bidders, whether they have a large job with deadline or general demand over time. \citet{ALIZ10} design a negotiation-based mechanism for setting price contracts in the presence of demand uncertainty.  \citet{BCCLN14} consider the pricing problem faced by a seller setting on-demand prices over multiple time periods and uncertain supply, with agents who arrive and have different deadlines for their tasks. The paradigm of a dual spot+reserve mechanism has also received a lot of attention.  \citet{WLL12} uses a Markov decision process to model the designer's choice of how to partition supply between the reserve and spot markets. \citet{AKK12} models the cloud market as a queuing model, in which a continuum of jobs arrive and have (private) waiting costs. They find that a fixed cost model provides greater expected revenue than a spot market. Additionally, recent works \citep{MSJ14,MH12} have focused on the problem faced by bidders in such a market: when to use the spot market and when to reserve. \citet{BBST13} analyzes the expected spot prices in comparison to their reservation prices, and find that it is very likely that Amazon is intentionally manipulating the price or supply distribution so as to provide users with more uncertainty in the spot market. 
In all of the models described above, agents are risk-neutral and do not have budgets. As far as we are aware, our work is the first to use risk aversion to explain the prevalence of a spot+reserve market.

Auctions for cloud computation resources share similarities to electricity markets, where the split between a spot market and a so-called ``futures" market is common.  Indeed, the use of both markets has been advocated to account for risk-averse buyers and sellers (see e.g., \cite{AC10}).  One difference is that, unlike the on-demand market for cloud computation, futures markets for electricity are typically resolved years in advance.

While most work in auction theory assumes risk-neutral agents, some work has been done for auctions with risk-averse bidders. Optimal auctions have been characterized for simple settings\ \citep{MR84, M83}, but the solutions are generally not expressible in closed form.  
It is therefore more common to study the simple auctions used in practice, with the general finding that second-price or spot-like auctions do poorly for revenue when compared with first-price auctions\ \citep{RS81, HMZ10, FHH13}. \citet{M87} has looked at the preferences of bidders in the auctions, and showed that first-price auctions not only can get more revenue than the second-price auction, but also can be preferred by bidders due to reduction in uncertainty around the payment. Our model of risk-aversion as the presence of a soft budget is closely related to the capacitated utilities model of \citet{FHH13}, where the capacity in their model corresponds to value minus budget in our model. They show that with capacitated agents, a simple first-price auction with reserve has revenue that approximates the revenue of the optimal mechanism.

Our results are of a similar flavor to the eBay-style buy-it-now auction considered by \citet{MK06}.  As in our model, they find that adding a buy-it-now option increases revenue, and as agents become more risk-averse, the optimal price increases.  Their model differs from ours in that they consider an explicitly randomized allocation rule, rather than clearing the market at a spot price, in order to incentivize use of the buy-it-now (i.e., reservation) option.

\section{Preliminaries}
\label{sec:prelim}

In our model, a single cloud provider (the seller) is selling %
compute resources to a continuum of clients (the bidders).  

\textit{Utility structure.}
Each bidder $i$ has value $\vali \in [0,1]$ for a single compute instance.  As is standard in the economics literature, we model the risk attitude of bidder $i$ through a utility function $\utili : \R_{\geq 0} \to \R_{\geq 0}$.  If the bidder obtains an instance and pays $p$, then her utility is $\utili( \vali - p )$.  We will assume that $\utili(0) = 0$, $\utili'(0)=1$, $\utili$ is continuous and non-decreasing, and that $\utili$ is not identically $0$.  Note that since $\utili(0) = 0$, we can think of $\vali$ as the maximum price at which bidder $i$ is willing to purchase an instance.  A bidder that does not obtain an instance will pay nothing and have utility $0$.  

We focus our attention on agents that are \emph{risk averse}.  That is, we will assume that utility curves are weakly concave, as is standard when modeling risk aversion.  We allow $\utili$ to be linear, in which case bidder $i$ is said to be risk-neutral.

Roughly speaking, an agent with a utility curve that is ``more concave'' will be more risk averse, in the sense that they are more likely to prefer guaranteed outcomes to uncertain lotteries.  More formally, we say that utility function $\util$ is more risk averse than $\util^*$, and write $\util \preceq \util^*$, if for every distribution $L$ over non-negative real values and every fixed value $d \geq 0$, if $\Ex[x \sim L]{\util(x)} \geq \util(d)$, then $\Ex[x \sim L]{\util^*(x)} \geq \util^*(d)$.  In other words, if an agent with utility curve $\util$ prefers a lottery $L$ over a guaranteed payout of $d$, then an agent with utility curve $\util^*$ would prefer the lottery as well.  This defines a partial order over utility curves.  Note that, under this definition, all (weakly) concave utility curves are (weakly) more risk-averse than a risk-neutral (i.e., linear) curve.  Note also that for twice-differentiable utility curves, $\util$ is more risk-averse than $\util^*$ if and only if the standard Arrow-Pratt measure of risk-averson is nowhere lower for curve $\util$ than for curve $\util^*$ \citep{Pratt64}.\footnote{We define risk aversion with respect to agent preferences directly, rather than via the Arrow-Pratt measure, to avoid requiring utility curves be twice differentiable.}

\textit{Demand structure}   Types are distributed according to a joint distribution $\typeCDF$ on pairs $(\val, \util)$.
For ease of exposition, we will assume throughout that $\typeCDF$ is supported on a finite collection of $(\val, \util)$ pairs.  Write $V$ and $U$ for the (finite) sets of values and utility curves that support $\typeCDF$, and for $(\val, \util) \in V \times U$ we will write $\typePDF(\val, \util)$ for the probability that an agent has type $(\val, \util)$.

We will use $\valCDF(\val)$ to refer to the induced distribution over values; that is, $\valCDF(\val)$ is the probability that an agent's value is at most $\val$.
We will assume a large-market condition, which is that the aggregate demand is distributed exactly according to the type distribution $\typeCDF$.

\textit{Supply structure} The supply of instances, $\supply$, is unknown to the bidders and seller until the instances are to be allocated.  The supply is then drawn from a distribution, $\supplyCDF$.  We will normalize the supply so that $\supply$ represents the fraction of the market that can be simultaneously served, hence $\supply\in [0,1]$. 

\subsection{Auction Formats}
\label{sec.auctions}

We will consider three auction formats in this paper: spot auctions, reservation auctions (previously referred to as on-demand), and dual (or spot+reservation) auctions.

\begin{description}
\item[Spot ($\mechspot$)]
One type of auction to run is a \emph{market-clearing} auction, or a spot auction.  In this auction, buyers submit bids.  A market-clearing price $\pricespot$ is chosen such that the quantity of bids exceeding $\pricespot$ is equal to the supply. Under our unit-demand and large market assumptions, it is a dominant strategy for a bidder to bid her value; henceforth we assume the bids in the spot market equal the values.
We observe that a market-clearing price always exists in our market, even in the presence of 
non-linear utilities:
with available supply $\supply$, and distribution over values $\valCDF$, the market clearing price, written $\pricespot(\supply)$, is precisely\footnote{This price may not be unique if $q = 0$ or $q=1$.  In these cases we define $\pricespot(\supply)$ to be the supremum of prices satisfying the written condition, which will be $\infty$ for $q = 0$.} the value for which $\supply = 1-\valCDF(\pricespot(\supply))$.

\item[Reservation ($\mechres$)] In a reservation-only (or ``on-demand") market, the auctioneer sets a fixed price $\pricereservation$ per instance, in advance of seeing the realization of supply.  Price $\pricereservation$ need not be a market-clearing price.  If there is not enough supply to satisfy the demand for instances at this price, the winning bidders are chosen uniformly at random from among those who wish to purchase.

\item[Spot+Reservation ($\mechspotandres$)] In a spot and reservation market, the auctioneer first sets a fixed price $\pricereservation$ and runs a reservation auction.  The remaining inventory of supply (if any) is then sold via a spot auction.
The exact timeline of events in the spot and reservation auction $\mechspotandres$ is as follows:

\begin{enumerate}
\item Auctioneer announces reservation price $\pricereservation$.
\item Bidders realize types 
$(\valagent, \utilagent) \sim \typeCDF(\val, \util)$.
\item Each bidder decides whether to purchase an instance in the reservation auction, indicated by $\actionreservestage\in \{0,1\}$. Let 
$\totalreserved = \sum_{v,u} \actionreservestage(\val, \util)\ \typePDF(\val, \util)$
be the total volume of reserved instances. 
\item Auctioneer realizes supply $\supply\sim \supplyCDF$, and reserved instances are allocated as in the reservation market described above.  
\item If $\supply > \totalreserved$, the auctioneer runs a market-clearing auction to clear the excess capacity. Let $\pricespot(\supply)$ be the resulting market-clearing spot price.
\end{enumerate}

Note that our specification does not ask bidders to decide whether or not to participate in the spot market. The fact that bidders are unit demand, and that the spot auction is truthful under our large market assumption, implies that in equilibrium bidders will bid (truthfully) in the spot auction if (and only if) they don't buy an instance in the reservation auction.

\end{description}

For a given (implicit) strategy profile for mechanism $\mechspotandres$, we will write $\spotCDF(\pricespot)$ for the cumulative distribution function of the resulting spot prices.

\textit{Solution concept: subgame-perfect equilibrium.}
For each of these auctions, the solution concept we apply is \emph{subgame-perfect equilibrium}.  A strategy profile for a multi-stage game forms a subgame-perfect equilibrium (SPE) if, at every stage $t$ of the game and every possible history of actions by players in previous stages, no agent can benefit by unilaterally deviating from her prescribed strategy from stage $t$ onward.

For the spot auction and reservation auction, there is only one stage of the resulting game and hence equilibria are straightforward: each agent chooses to purchase her utility-maximizing quantity of instances given the specified price.  

For mechanism $\mechspotandres$, we can characterize the SPE as follows.
In the second (i.e., spot) stage of the mechanism, the equilibrium condition implies that agents always purchase instances if and only if their value is above the realized spot price.  Thus, the only strategic choice to be made by agents is in the first (i.e., reservation) stage of the mechanism, where each agent must select whether to purchase an instance in the reservation market.  We will therefore define a strategy profile $\strats$ to be a mapping from a type $(\val, \util)$ to an action $\{0,1\}$, where $\strats(\val,\util)$ is interpreted as the number of instances to purchase in the reservation market.  Note that the distribution over market-clearing prices in the second stage is completely determined by the actions of agents in the first stage, and hence is determined by $\strats$.  An equilibrium is then a strategy profile such that no agent can benefit by unilaterally deviating from strategy $\strats$ (i.e., by reserving more or fewer instances), given the distribution of spot prices implied by $\strats$.

\subsection{Objectives}
We consider three objectives when evaluating mechanisms: revenue ($\rev$), welfare ($\welfare$) and efficiency ($\eff$). The revenue of a mechanism $\mech$, $\rev(\mech)$, is the sum of the payments made to the auctioneer. 
Note that for the spot+reservation mechanism,
$
\rev(\mechspotandres) = \pricereservation\totalreserved +  \Ex[\supply\sim\supplyCDF]{ \pricespot(\supply)(\supply-\totalreserved)}
$
where we used the fact that $T \leq q$ with probability $1$.
The welfare of a mechanism $\welfare(\mech)$ is the sum of utilities of all agents, including the auctioneer (whose utility is precisely the revenue of the mechanism). 
The efficiency $\eff(\mech)$ of a mechanism measures the value created, without considering the welfare lost due to the (non-linear) utility functions of agents. For the spot+reservation mechanism:
$
\eff(\mechspotandres) = \Ex[(\valagent, \utilagent) \sim \typeCDF]{\valagent \cdot \strat(\valagent,\utilagent) + \valagent \cdot (1 - \strat(\valagent,\utilagent))\cdot \spotCDF(\valagent)} 
$

If an agent reserves, her value generated is $\valagent$.  If she does not reserve, her value generated is $\valagent \cdot \spotCDF(\valagent)$, which is her value times the probability that the spot price is below her value. Note that if agents are risk-neutral (i.e., have the identity function as their utility functions), then $\eff(\mech)=\welfare(\mech)$. 

\section{Equilibrium Behavior \& Analysis}

In this section, we analyze the choices of bidders and use this to characterize equilibrium of the spot+reservation market. We begin by noting the relationship between the spot price distribution and the distributions of supply, type, and reservation demand in equilibrium.  Recall that $\supplyCDF$ denotes the CDF of the supply distribution.  

\begin{lemma}\label{lem:generalchar}
Fix strategy profile $\strats$, let $\spotCDF$ be the distribution of spot prices under $\strats$, and let $\reserverCDF(\price)$ 
be the volume of reserved instances demanded from agents with value at most $\price$, under $\strats$. 
Then 
$\strats$ forms an equilibrium if and only if, for all $p$, 
\begin{align}
\spotCDF(\price) &= 1-\supplyCDF\left(1-\typeCDF(\price) + \reserverCDF(\price)\right), \text{ and }\\
T(p) &= \sum_{\substack{v \in V\\v \leq p}} \sum_{u \in U} f(v,u) \cdot \mathbbm{1}\left[ u(v - p_r) \geq \Ex[p \sim S]{u(\max\{v-p, 0\})} \right].
\end{align}
\end{lemma}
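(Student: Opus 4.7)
The plan is to verify the two equations separately. The $\spotCDF$ equation is nothing more than the market-clearing identity of the spot stage (and in fact holds for any strategy profile, not only equilibria), while the $\reserverCDF$ equation encodes the first-stage best-response condition and carries the real equilibrium content. Both directions of the iff then follow by combining these two components.

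First, I would derive the spot-price formula. Fix $\strats$ and a realized supply $\supply$. The reservation stage removes $\totalreserved$ units of both supply and demand, so at a candidate clearing price $\price$ the residual spot demand equals the measure of non-reserving agents with value above $\price$, namely $(1 - \valCDF(\price)) - (\totalreserved - \reserverCDF(\price))$. Equating this to residual supply $\supply - \totalreserved$ yields $\supply = 1 - \valCDF(\price) + \reserverCDF(\price)$. The right-hand side is non-increasing in $\price$, because $\valCDF(\price) - \reserverCDF(\price)$ is the measure of non-reserving agents with value at most $\price$, hence non-decreasing. I would then invert this relation to obtain $\pricespot(\supply) \le \price \iff \supply \ge 1 - \valCDF(\price) + \reserverCDF(\price)$, and take $\supply \sim \supplyCDF$ to get the stated formula for $\spotCDF(\price)$.

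Next, I would derive the reservation formula from subgame-perfection. By the large-market assumption, truthful bidding in the spot stage is weakly dominant (no individual shifts the clearing price), so a non-reserving agent of type $(\val, \util)$ obtains expected utility $\Ex[\price \sim \spotCDF]{\util(\max\{\val - \price, 0\})}$, whereas reserving yields the deterministic payoff $\util(\val - \pricereservation)$. Subgame-perfection then prescribes reserving iff the latter is at least the former, and the aggregate measure of types with value at most $\price$ that reserve is exactly the indicator sum in the statement. For the converse direction of the iff, if both equations hold then $\strats$ is best-responding at the first stage against the spot-price distribution it induces via market clearing, while the spot stage is already best-responding by dominant strategies; so $\strats$ is an SPE.

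The main subtlety I anticipate concerns boundary cases stemming from atoms in $\spotCDF$, which will be plentiful since $\typeCDF$ has finite support. Agents whose value coincides with a spot-price atom, or who are exactly indifferent between reserving and bidding, must be handled consistently. The $\ge$ convention in the indicator absorbs first-stage ties into reservation, and the $\max\{\val - \price, 0\}$ in the spot-stage utility absorbs spot-stage ties into losing (consistent with allocating to bidders strictly above the clearing price). Both tie-breaks are without loss because indifferent agents obtain equal utility under either convention, so neither choice breaks the equilibrium characterization.
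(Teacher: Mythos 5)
Your proposal is correct and follows essentially the same approach as the paper: the first equation is the market-clearing identity of the spot stage and the second is the first-stage best-response condition, which is exactly the paper's (much terser) two-sentence argument. Your additional care with the residual supply/demand computation and the tie-breaking conventions fills in details the paper leaves implicit.
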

\begin{proof}
The probability that the spot price is at most $\price$ is exactly the probability that the supply is greater than necessary to satisfy all of the demand for resources from bidders with higher marginal values than $\price$, %
plus all reservation demand for resources with lower marginal values than $\price$.  On the other hand, the volume of reserved resources demanded from agents with value at most $p$, at equilibrium, is precisely the probability that such an agent will prefer the deterministic reservation outcome to the lottery over outcomes determined by the distribution over spot market prices.
\end{proof}

In light of Lemma \ref{lem:generalchar}, we will tend to equate equilibria with the resulting distributions $\spotCDF$ and $\reserverCDF$, rather than with an explicit strategy profile $\strats$.

\begin{lemma}\label{lem:unit-monotone}
Purchasing in the reservation stage is monotone in the risk-aversion of
$\utilagent$: if a $(\valagent, \utilagent)$ bidder (weakly) prefers to reserve an instance, then a 
$(\valagent, \utiliabove)$ 
bidder with %
$\utiliabove \preceq \utilagent$ (weakly) prefers reserving. 
\end{lemma}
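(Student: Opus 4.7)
The lemma is essentially a direct consequence of the definition of the risk-aversion partial order $\preceq$: the plan is to translate the reservation-vs-spot preference into the lottery-vs-deterministic-outcome form used in that definition, and then apply its contrapositive.

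Set $d := \valagent - \pricereservation$ and let $L$ denote the distribution of the non-negative random variable $\max\{\valagent - p,\,0\}$ when $p$ is drawn from the spot-price distribution $\spotCDF$ induced by the fixed strategy profile. (We may assume $\valagent \geq \pricereservation$; otherwise the $(\valagent, u)$ bidder cannot weakly prefer to reserve, since reserving yields a negative surplus while the spot alternative yields non-negative utility, so the hypothesis is vacuous.) In this notation, the hypothesis that the $(\valagent, u)$ bidder weakly prefers reserving reads exactly $u(d) \geq \Ex[x \sim L]{u(x)}$, and the goal is the analogous inequality $u^*(d) \geq \Ex[x \sim L]{u^*(x)}$.

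By the definition of $u^* \preceq u$, we have the implication $\Ex[L]{u^*(x)} \geq u^*(d) \Rightarrow \Ex[L]{u(x)} \geq u(d)$ for every lottery $L$ and every deterministic value $d$. Its contrapositive is $u(d) > \Ex[L]{u(x)} \Rightarrow u^*(d) > \Ex[L]{u^*(x)}$, which immediately delivers the desired conclusion whenever the hypothesis is strict.

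The main technical work is the boundary case of equality $u(d) = \Ex[L]{u(x)}$, where the contrapositive does not directly apply. My plan here is a continuity/perturbation argument: replace $\pricereservation$ by $\pricereservation - \varepsilon$ (equivalently, $d$ by $d + \varepsilon$) for small $\varepsilon > 0$ to lift the hypothesis to a strict inequality, apply the contrapositive, and then take $\varepsilon \to 0^+$ using continuity of $u^*$. The one subtlety is that this perturbation may not strictly increase $u(d)$ if $u$ is locally constant immediately above $d$; in that case I would instead perturb the lottery downward (replacing each outcome $x$ by $\max\{x - \varepsilon,\,0\}$), which strictly decreases $\Ex[L]{u(x)}$ provided $L$ places mass anywhere $u$ is strictly increasing, and again take the limit. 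I expect this boundary bookkeeping to be the main obstacle, but conceptually the entire proof is one application of the contrapositive of the $\preceq$ definition.
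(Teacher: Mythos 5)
Your approach is the same as the paper's: recast ``reserve'' as a deterministic payoff $d=\valagent-\pricereservation$ and ``spot'' as the lottery $L$ over $\max\{\valagent-p,0\}$, then invoke the definition of the order $\preceq$. One step you skip, and which the paper does carry out, is justifying that reserving actually \emph{is} a deterministic outcome. In the mechanism as defined, if the realized supply falls short of the total reserved volume ($\supply<\totalreserved$), reserved instances are rationed uniformly at random, so a reserving bidder faces a compound lottery, not a sure payoff. The paper disposes of this by observing that in that shortfall event the bidder's utility is $0$ regardless of her action and of her utility curve (no spot instance is available either), so one may condition on the complementary event where reserving guarantees allocation; the comparison then reduces cleanly to $d$ versus $L$. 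Since this lemma is stated in Section~3, before the Section~4 assumption that $\Prxs{}[\supply<\totalreserved]=0$ is imposed, this conditioning step is genuinely needed and should be added to your argument.

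On the weak-versus-strict boundary: you are right that the stated definition of $\preceq$ only yields the contrapositive ``strict preference for $d$ transfers to strict preference for $d$,'' and the paper simply glosses over the equality case, so your extra care here goes beyond the paper. Your perturbation plan mostly works, but note that in the degenerate sub-case where $\utilagent$ is constant on the entire relevant range (so neither raising $d$ nor shifting $L$ down changes $\Ex[x\sim L]{\utilagent(x)}$ or $\utilagent(d)$), a cleaner fix is to mix $L$ with a small mass on the outcome $0$: since $\utilagent(0)=0<\utilagent(d)$, this strictly breaks the tie on the $\utilagent$ side while perturbing $\Ex{\utiliabove}$ only slightly, and the contrapositive plus a limit then delivers the weak conclusion. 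With the conditioning step added and that patch in place, your proof is correct and follows essentially the paper's route.
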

\begin{proof}
We begin by considering the special event in which the agent is not allocated an instance even if they reserve, due to the supply being insufficient to honor all reservations and the agent not being selected randomly as a winner.  If this event occurs, the bidder's utility will necessarily be $0$, and this is independent of their utility curve and their chosen action (since, if $q < T$, no agent that enters the spot market will obtain an instance).
It therefore suffices to consider the agent's expected utility conditional on the event that the agent will be allocated an instance with certainty if they choose to reserve.  With this in mind, the utilities of a unit demand agent from reserving or participating only in the spot market, respectively, are
\begin{align*}
	\utilreservation(\valagent, \utilagent) &= \utilagent(\valagent - \pricereservation), \\
	\utilspot(\valagent, \utilagent) &= \Ex[\pricespot\sim\spotCDF]{\utilagent(\valagent - \pricespot) \cdot \Indic{\pricespot\leq \valagent} } .
\end{align*}
We now want to show that $\utilreservation(\valagent, \utilagent) \geq \utilspot(\valagent, \utilagent)$ implies 
$\utilreservation(\valagent, \utiliabove) \geq  \utilspot(\valagent, \utiliabove)$.  
Note that, fixing $v_i$, the spot market generates a certain lottery $L$ over values $(\vali - \pricespot)$, and the reservation market generates a certain value $\vali - \pricereservation$.  Thus, from the definition of risk aversion, if an agent with utility curve $\utili$ prefers the certain outcome to the lottery $L$, and $\utiliabove \succeq \utili$, then an agent with utility curve $\utiliabove$ prefers the certain outcome as well.
\end{proof}

\subsection{Equilibrium Existence and Uniqueness}
\label{sec:3.1}
We are now ready to establish uniqueness of equilibrium.  One subtlety about equilibrium uniqueness is the manner in which buyers break ties.  If a positive mass of agents is indifferent between the spot and reservation markets, there may be multiple market outcomes consistent with those preferences.  We will therefore fix some arbitrary manner in which bidders break ties, which could be randomized and heterogeneous across bidders.  Our claim is that for any such tie-breaking rule, the resulting equilibrium will be unique.

\begin{lemma}\label{lem:unique}
There is a unique equilibrium of  $\mechspotandres$.  Moreover, this equilibrium is computable in time proportional to the size of the support of type distribution $\typePDF$.
\end{lemma}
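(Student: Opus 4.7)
I would establish uniqueness by a constructive induction on the ordered support of the value distribution. Enumerate the values in the support of $\typeCDF$ as $v_1 < v_2 < \cdots < v_n$. The plan is to show that, for any fixed tie-breaking rule, the reservation decisions of all types $(v_k, u)$ are uniquely pinned down by the reservation decisions of agents with values $v_1,\dots,v_{k-1}$, and can be computed one value-level at a time.

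The technical engine is the observation that a value-$v_k$ agent's spot utility
\[
\Ex[\pricespot\sim\spotCDF]{u(\max\{v_k - \pricespot,0\})}
\]
depends only on the restriction of $\spotCDF$ to $[0,v_k)$, because the integrand vanishes for $\pricespot \ge v_k$. Combined with the characterization from Lemma~\ref{lem:generalchar}, namely $\spotCDF(p) = 1-\supplyCDF(1-\typeCDF(p)+\reserverCDF(p))$, and with the fact that for $p < v_k$ both $\typeCDF(p)$ and $\reserverCDF(p)$ depend only on types (and reservation decisions) of agents with value strictly less than $v_k$, one sees that the decision of a value-$v_k$ agent depends only on decisions of strictly lower-valued agents. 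The potential atom of $\spotCDF$ at $v_k$ itself (whose mass does depend on $\reserverCDF(v_k)$, and hence on the decisions we are about to make) contributes $u(0)=0$ to the expectation and so introduces no circularity — this is the subtle point I expect to be the main obstacle, and I would flag it explicitly.

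Given this separation, I would process the support points in increasing order. Base case: agents with value $v_1$ face $\spotCDF \equiv 0$ on $[0,v_1)$, so their spot utility is $0$ and each such agent reserves iff $u(v_1 - \pricereservation) \ge 0$, i.e., iff $v_1 \ge \pricereservation$ (breaking ties by the fixed rule). This uniquely determines $\reserverCDF(v_1)$, and therefore $\spotCDF$ on $[v_1,v_2)$ via Lemma~\ref{lem:generalchar}. Inductively, once the decisions of all types with value $< v_k$ are fixed, $\typeCDF$ and $\reserverCDF$ are known on $[0,v_k)$, hence so is $\spotCDF$, and each type $(v_k,u)$ can compare the (now computable) spot utility against $u(v_k - \pricereservation)$ to decide whether to reserve. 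Since the comparison is deterministic under the fixed tie-breaking rule, the reservation choice at each type is unique.

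The profile so constructed satisfies both equations of Lemma~\ref{lem:generalchar} by construction, and any equilibrium must agree with it at every inductive step (by the same strict comparison / fixed-tie-breaking argument applied in order of increasing $v_k$), so it is the unique equilibrium. For the complexity claim, I note that $\spotCDF$ is piecewise constant with atoms supported on $\{v_1,\dots,v_n\}$, so computing the spot utility for type $(v_k,u)$ reduces to a weighted sum of $u(v_k - v_j)$ over $j < k$ with weights that can be maintained incrementally as we advance through the support; the overall work is proportional to the size of the support of $\typePDF$, as claimed.
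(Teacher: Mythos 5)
Your proposal is correct and follows essentially the same route as the paper's proof: an induction over the ordered support of the value distribution, driven by the observation that a value-$v$ agent's spot utility depends only on $\spotCDF$ restricted to $[0,v)$ (the mass at or above $v$ contributing $u(0)=0$), so that $\reserverCDF(v)$ and then $\spotCDF(v)$ are pinned down by the already-determined lower portion via Lemma~\ref{lem:generalchar}. The ``no circularity from the atom at $v$'' point you flag is precisely the step the paper handles by noting that the nonzero values of $\max\{v-p_s,0\}$ are governed by $\spotCDF$ on $V\cap[0,v)$.
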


\begin{proof}
As shown in Lemma~\ref{lem:generalchar}, the challenge of characterizing equilibrium essentially reduces to characterizing the fraction of bidders who reserve at a given price, $\reserverCDF(\price)$.  This is because $\reserverCDF$ determines the distribution $S$ over spot prices, and $S$ (together with an arbitrary tie-breaking rule) uniquely determines the strategy profile $s$, since this can be inferred from the expected utility when choosing the spot market.  Thus, to show uniqueness and existence of equilibrium, it suffices to show uniqueness of the functions $S$ and $T$.

We will prove that, for all $v \in V$, $T(v)$ and $S(v)$ are uniquely determined by the functions $T$ and $S$ restricted to values less than $v$.  The result will then follow by induction on the elements of $V$.

Consider first an agent with value $v = \min V$.  Recall that the spot price is always at least $v$.  Thus, if $\pricereservation < v$ then the agent will always reserve, if $\pricereservation > v$ then the agent will always choose the spot market, and if $\pricereservation = v$ the agent will be indifferent and apply the fixed tie-breaking rule.  In each case, the value of $T(v)$ is uniquely determined, and thus $S(v)$ is as well.

Now choose $v > \min V$, and suppose $T$ and $S$ are determined for all elements of $V \cap [0, v)$.  We claim the distribution of the random variable $\max\{v - p_s, 0\}$, where $p_s$ is distributed according to $S$, is then uniquely determined.  This is because the non-zero values of this random variable are distributed according to $S$ restricted to values in $V \cap [0,v)$.  But, by Lemma~\ref{lem:generalchar}, this random variable determines the value of $T(v)$, which in turn determines the value of $S(v)$.  Thus $T(v)$ and $S(v)$ are uniquely determined by $S$ and $T$ on $V \cap [0,v)$, as required.  Moreover, they can be explicitly computed by evaluating the summation in Lemma~\ref{lem:generalchar}. 

\end{proof}

\subsection{An Example: Soft Budgets}
\label{sec:budgets}

In this section we present a special case of risk-aversion, driven by soft budgets, and give an interpretation of our equilibrium characterization for this case. 

Suppose that each buyer $i$ is characterized by their value $\vali$ for a compute instance and a soft budget $\budgeti \in [0, \vali]$.  We think of $\budgeti$ as a budget of funds that has been allocated to acquiring a compute instance.  If the buyer obtains an instance but pays less than $\budgeti$, the residual budget is lost: it is as if they had paid $\budgeti$.  On the other hand, if the buyer pays more than $\budgeti$, they suffer no additional penalty; they simply incur the cost of their payment.

This scenario is captured by a piecewise linear utility curve, $\utili(z) =  \min(z, \vali - \budgeti)$. 
Note that if a buyer with budget $b_i$ obtains an instance and pays $p_i  > \budgeti$ their utility is $\utili(\vali - p_i) = \vali - p_i$, whereas  if they pay $p_i < \budgeti$ their utility is $\utili(\vali - p_i) = \vali - \budgeti$.
For a fixed value $\vali$, an agent with a higher budget is more risk-averse.  To see this, note that if a buyer {\it strictly} prefers a lottery $L$ to a deterministic outcome $d$, then it must be that $d < \vali - \budgeti$ (since otherwise $d$ must be utility-maximizing).  In this case, decreasing the budget can only make the lottery more valuable, while not affecting the utility from the deterministic outcome.  Thus, a decreased budget can only increase the propensity to select a lottery over a deterministic outcome.

The monotonicity result in Lemma~\ref{lem:unit-monotone} thus results in a partitioning of agents that prefer the spot market to the reservation market and vice versa by an indifference curve over budgets. See Figure~\ref{fig:partitioning} for an illustration. Any agent with $(\vali, \budgeti)$ below the curve prefers the spot market; any agent above the curve (such that $\valagent\geq \budgetagent$), prefers the reservation market.   Lemma~\ref{lem:unique} shows that this indifference curve is unique, given the distribution over agent types, and precisely specifies which agents choose to reserve and which enter the spot market. 

\begin{figure}[t]
\centering
\begin{tikzpicture}[xscale=3.1, yscale=2.1, domain=0:1, smooth]
	\draw [ultra thick]  (0.5,0.5)  .. controls ( 0.6,0.45) .. (0.7,0.18) .. controls ( 0.8,0.05) .. (1,0.05);
    \node at (1,1) {};
	\draw[-] (0,0) -- (1,1);
	\node at (0.7,-0.09) {value ($\valagent$)};
	\node[rotate=90] at (-0.09, 0.7) {budget ($\budgetagent$)};    
	\node at (0.85,0.5) {reserve};
	\node at (0.4,0.7) {$\emptyset$};		
	\node at (0.5,0.2) {spot};	    
	\node at (0.85, 0.25) {$\indifferencebudget(\val)$};
    \draw[-] (0,0) -- (0,1);
	\draw[-] (0,0) -- (1,0);
\end{tikzpicture}
\caption{With soft-budgets, a monotone-decreasing indifference curve partitions agents into those that reserve an instance and those who rely on the spot market.  %
 \label{fig:partitioning}}
\end{figure}
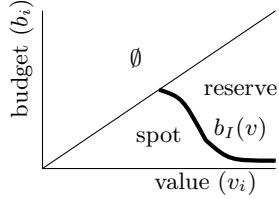

\section{Comparative Statics}
\label{sec:comparative}

In this section, we first consider the impact of changes to buyer risk attitudes.  We show that as agents become more risk averse, more agents use the reservation market and revenue increases, for every setting of the reservation price.  Second, we compare the reservation+spot mechanism to the spot market and the reservation market.  We first show that the combination mechanism's outcomes are more efficient than running a reserve market on its own.  We then show that it generates both more revenue and more welfare than running only a spot market. Proofs are included in the Appendix.

The results in this section hold under two assumptions on the reservation price set by the seller.
First, we will make the assumption that $\pricereservation$ is set high enough so that, in the resulting equilibrium, $\Pr_{\supply \sim \supplyCDF}[ \supply < T ] = 0$.  That is, over the uncertainty in supply, the mechanism can serve the reserved instances with certainty.  This assumption is motivated by the fact that these instances are typically viewed as guaranteed by the mechanism.

Another natural and practical property is that the reservation price $\pricereservation$ be set high enough that it will be greater than the expected spot price.  That is, $\pricereservation$ is large enough that it is more costly, in expectation, to reserve a guaranteed instance than to bid for an instance in the spot market.  %

\subsection{Effect of Increased Risk Aversion}
\label{sec:increase.budgets}

\newcommand{\alt}[1]{\overline{#1}}

We consider the impact of an increase in risk aversion.  Consider type distribution $F$ and a type distribution $F^+$ induced by a pointwise transformation $g^+(U,V)\to (U,V)$ applied to each point in $F$ which weakly increases risk aversion and does not affect values. Specifically, for any  $(\util^+, \val^+)=g^+(\util, \val)$, $\val^+=\val$ and $\util^+ \preceq \util$. %
In the following lemma, we show that such a change can only increase the fraction of agents who choose to reserve, and can only increase revenue. 

\begin{lemma} \label{lem:moreRAmorerev}
For mechanism $\mechspotandres$, and for any reserve price $\pricereservation$, if risk aversion of agents increases then the fraction of agents who purchase in the reservation stage increases, as does the expected revenue of the mechanism.
\end{lemma}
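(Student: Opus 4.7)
The plan is to prove two claims in sequence: (i) pointwise monotonicity $T^+(v) \ge T(v)$ (equivalently $S^+(v) \le S(v)$) for every $v \in V$, via induction on values, and (ii) a revenue decomposition that turns (i) into $\rev^+ \ge \rev$. The induction follows the structure used in Lemma~\ref{lem:unique}: the reservation decision at value $v$ depends only on the spot-price CDF restricted to $[0,v)$ (since $\max\{v-p,0\}=0$ for $p\ge v$), so once we control $T$ and $S$ on values below $v$, we can control them at $v$ and bootstrap upward. Intuitively this captures the feedback loop described in the introduction: a shift toward reservations thins the spot market, making its prices stochastically higher, which only reinforces the reservation decision.

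For the inductive step, fix $v_j\in V$ and assume $S^+(v_{j'}) \le S(v_{j'})$ for all $j'<j$. Take any $(v_j,u)$ in $\mathrm{supp}(F)$ that reserves under $(S,p_r)$, and let $(v_j,u^+)=g^+(v_j,u)$ be its image in $F^+$. I will chain two inequalities. First, by Lemma~\ref{lem:unit-monotone}, since $u^+ \preceq u$, the agent $(v_j,u^+)$ would also weakly prefer to reserve under the \emph{original} spot distribution $S$, i.e.\ $u^+(v_j-p_r) \ge \Ex[p\sim S]{u^+(\max\{v_j-p,0\})}$. Second, by the inductive hypothesis, $p_s^+ \succeq_{\mathrm{st}} p_s$ on $[0,v_j)$; since $p \mapsto u^+(\max\{v_j-p,0\})$ is non-increasing, this yields $\Ex[p\sim S]{u^+(\max\{v_j-p,0\})} \ge \Ex[p\sim S^+]{u^+(\max\{v_j-p,0\})}$. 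Combining, $(v_j,u^+)$ reserves under $(S^+,p_r)$ (tie-breaking carries over since strict preferences are only strengthened). Summing this type-by-type implication across the utility curves at value $v_j$, and using $F(v_j)=F^+(v_j)$ together with the inductive hypothesis on $T$, gives $T^+(v_j) \ge T(v_j)$. Plugging into Lemma~\ref{lem:generalchar} and using monotonicity of $\supplyCDF$ yields $S^+(v_j) \le S(v_j)$, closing the induction.

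For revenue, let $T,T^+$ be the total reservation masses and $p_s(q),p_s^+(q)$ the spot prices as functions of supply. Part (i) implies $T^+ \ge T$ and, because $p \mapsto 1-F(p)+T(p)$ is non-increasing and shifts upward under $F^+$, $p_s^+(q) \ge p_s(q)$ pointwise. Using $\rev(\mechspotandres)=p_r T + \Ex[q\sim \supplyCDF]{p_s(q)(q-T)}$ and adding and subtracting $\Ex{p_s(q)(q-T^+)}$, I get the decomposition
\[
\rev^+ - \rev \;=\; (T^+-T)\bigl(p_r - \Ex[q]{p_s(q)}\bigr) \;+\; \Ex[q]{(p_s^+(q)-p_s(q))(q-T^+)}.
\]
Both terms are non-negative: the first by $T^+\ge T$ and the standing assumption $p_r \ge \Ex{p_s}$; the second by $p_s^+ \ge p_s$ and the standing assumption $q \ge T^+$ almost surely. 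The main obstacle is the self-referential nature of equilibrium (reservations and spot prices each depend on the other), which I expect to neutralize by exactly the bottom-up induction on $V$ used in Lemma~\ref{lem:unique}; the second-trickiest point is making sure the stochastic-dominance step uses $S$ and $S^+$ only on the range where the inductive hypothesis has already been established, which holds precisely because $u^+(\max\{v-p,0\})$ vanishes for $p\ge v$.
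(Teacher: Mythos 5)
Your proof follows essentially the same route as the paper's: the identical induction on $V$ establishing $T^+(v) \ge T(v)$ and $S^+(v) \le S(v)$, with the inductive step combining increased risk aversion (via Lemma~\ref{lem:unit-monotone}) with first-order stochastic dominance of the spot price restricted to $[0,v)$, which is exactly where the reservation decision's dependence on $S$ lives. Your explicit revenue decomposition $(T^+-T)(p_r - \Ex[q]{p_s(q)}) + \Ex[q]{(p_s^+(q)-p_s(q))(q-T^+)}$ is a sharper rendering of the paper's one-line revenue argument, but it rests on the same two facts and does not constitute a different approach.
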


The intuition underlying Lemma \ref{lem:moreRAmorerev} is as follows.  The first order effect from a change in risk aversion is an increase in $T$, the fraction of users who choose to reserve at a given price.  This increase in reservations translates into higher spot prices, since it reduces the quantity sold in the spot market.  Higher spot prices in turn cause more users to prefer to reserve, which can only increase spot prices further.  This can be shown by induction over agent values.

\subsection{Comparing Mechanisms}
\label{sec:comparing}

We now compare welfare and revenue of $\mechspotandres$ to $\mechspot$ and $\mechres$. Here we make use of the two assumptions discussed in Section \ref{sec.auctions}: first, the reservation phase is not oversubscribed, i.e., the reservation price is set sufficiently high that there will be sufficient supply to fulfill the demand for reserved instances; and second, the reservation price is sufficiently high to be above the expected spot price.

We begin by comparing the revenue of the dual mechanism with the expected revenue of a spot-only market.  Note that, trivially, the best revenue of the combined mechanism is at least the revenue of a spot market; this is because, in the combined mechanism, the reserve price can be set sufficiently high that all customers buy in the spot market.  We show something stronger: for \emph{every} choice of reservation price, the revenue of the combined mechanism is at least that of a spot market run in isolation. 

\begin{lemma}
For any choice of the reservation price satisfying our assumptions, 
the expected revenue of the reservation and spot mechanism is weakly greater than the revenue of the spot-only market.
\end{lemma}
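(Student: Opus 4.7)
The plan is to compare the dual and spot-only clearing prices pointwise in the supply realization, and then use the assumption $\pricereservation \geq \Ex[\supply\sim\supplyCDF]{\pricespot}$ to absorb the reservation-stage revenue. Let $\pricespot^{d}(\supply)$ denote the clearing spot price in the spot stage of $\mechspotandres$ at supply $\supply$, and let $\pricespot^{\OPT}(\supply)$ denote the clearing price if the same supply $\supply$ had been sold in a spot-only market. Write $H(p) := 1 - \valCDF(p)$ for the mass of bidders with value at least $p$, and let $\reserverCDF(p)$ denote the equilibrium reservation volume from value-at-most-$p$ agents as in Lemma~\ref{lem:generalchar}, with total reservations $\totalreserved = \reserverCDF(\infty)$.

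The first step is a pointwise spot-price bound. The spot-only market clears at $\supply = H(\pricespot^{\OPT}(\supply))$. In the dual mechanism, the spot stage faces supply $\supply - \totalreserved$ and demand from non-reservers only; at price $p$ the non-reserver demand is $H(p) - (\totalreserved - \reserverCDF(p))$. Setting supply equal to demand and rearranging yields
\begin{equation*}
\supply \;=\; H(\pricespot^{d}(\supply)) + \reserverCDF(\pricespot^{d}(\supply)).
\end{equation*}
Since $\reserverCDF \geq 0$ and $H$ is weakly decreasing, comparing with the spot-only condition $\supply = H(\pricespot^{\OPT}(\supply))$ forces $\pricespot^{d}(\supply) \geq \pricespot^{\OPT}(\supply)$ at every realization of $\supply$.

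The second step assembles the revenue comparison. Using the formula from Section~2,
\begin{align*}
\rev(\mechspotandres) - \rev(\mechspot)
&= \pricereservation \totalreserved + \Ex[\supply]{\pricespot^{d}(\supply)(\supply-\totalreserved)} - \Ex[\supply]{\pricespot^{\OPT}(\supply)\cdot \supply} \\
&= \totalreserved\bigl(\pricereservation - \Ex[\supply]{\pricespot^{d}(\supply)}\bigr) + \Ex[\supply]{\bigl(\pricespot^{d}(\supply)-\pricespot^{\OPT}(\supply)\bigr)\supply}.
\end{align*}
The first summand is nonnegative by the expected-spot-price assumption (applied to the dual mechanism, whose spot distribution is precisely the outside option a bidder faces when considering whether to reserve), and the second summand is nonnegative by the pointwise bound from Step 1, finishing the proof.

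The only genuinely nontrivial step is the pointwise bound $\pricespot^{d} \geq \pricespot^{\OPT}$. The subtlety is that moving agents into the reservation stage simultaneously removes $\totalreserved$ units of supply \emph{and} some amount of demand from the spot stage, and one might worry these effects cancel. They do not: a reserver with value $v$ counts against demand only at prices $p \leq v$, so a reserver whose value lies above the spot-only clearing price cancels one-for-one (no effect on price) but a reserver whose value lies below that price removes supply without removing any demand near the clearing price, pushing $\pricespot$ strictly up. The clearing identity $\supply = H(\pricespot^{d}) + \reserverCDF(\pricespot^{d})$ encodes this asymmetry through the extra $\reserverCDF$ term, and getting that bookkeeping right is the crux of the argument.
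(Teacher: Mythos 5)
Your proof is correct, but it takes a genuinely different route from the paper. The paper's argument is a deformation (comparative-statics) one: start from a hypothetical population of risk-neutral agents with the same value marginals, observe that under the assumption $\pricereservation \geq \Ex[\supply]{\pricespot^{s+r}(\supply)}$ all such agents choose the spot market so the dual mechanism coincides with $\mechspot$, and then invoke Lemma~\ref{lem:moreRAmorerev} to conclude that monotonically increasing risk aversion toward the true distribution can only raise the dual mechanism's revenue while leaving the spot-only revenue unchanged. You instead argue directly: a pointwise comparison of clearing prices via the identity $\supply = H(\pricespot^{d}) + \reserverCDF(\pricespot^{d})$ versus $\supply = H(\pricespot^{\OPT})$, followed by the exact algebraic decomposition $\rev(\mechspotandres)-\rev(\mechspot) = \totalreserved(\pricereservation - \Ex[\supply]{\pricespot^{d}}) + \Ex[\supply]{(\pricespot^{d}-\pricespot^{\OPT})\supply}$. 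Your pointwise price bound is the same fact the paper proves separately (in the proof of Theorem~\ref{thm:welfarebound}, by the counting argument that a lower dual spot price would force more than a $\supply$ fraction of buyers to purchase), and you correctly read the expected-spot-price assumption as applying to the dual mechanism's own spot distribution, exactly as the paper does in Claim~\ref{claim:1}. What your route buys is self-containment: it avoids the induction machinery of Lemma~\ref{lem:moreRAmorerev} and makes the two sources of the revenue gain (reservation premium over the dual spot price, and the upward shift of the spot price itself) explicit and separately nonnegative. What the paper's route buys is that it reuses already-proved comparative statics and makes transparent that the gain is driven by risk aversion per se. The only technicality you share with the paper is that $H(\pricespot^{d}) \leq H(\pricespot^{\OPT})$ yields $\pricespot^{d} \geq \pricespot^{\OPT}$ only up to the tie-breaking convention for flat regions of $H$, which the paper's supremum convention for the clearing price resolves.
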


\begin{proof}[sketch]
As in Lemma~\ref{lem:moreRAmorerev}, as risk aversion increases, revenue increases for a fixed reservation price. 
Fix a reservation price and consider starting with a distribution of risk-neutral agents. 
These agents will all bid in the spot market and thus the outcome (and in particular, the revenue) will be identical to the spot-only mechanism. By deforming the utility curves of the agents in a manner that only increases risk aversion, until they match the correct distribution, and applying Lemma~\ref{lem:moreRAmorerev}, we can conclude that the revenue of the dual market only increases while the revenue of the spot-only mechanism, which is unaffected by the utility curves, %
remains the same. 
\end{proof}

\begin{example}
	This example illustrates the revenue of the dual mechanism can be strictly greater than both the spot and the reservation mechanisms. We consider an example in which agent utility curves are specified by soft budgets, as in Section~\ref{sec:budgets}.  Recall that a budget of $0$ corresponds to risk-neutrality.  Take $\epsilon > 0$ to be sufficiently small, and consider the following distribution over buyer types:
\begin{itemize}
\item with probability $0.5 - \epsilon$, $(v,b) = (5,0)$
\item with probability $0.5$, $(v,b) = (10, 10-\epsilon)$
\item with probability $\epsilon$, $(v,b) = (20, 0)$
\end{itemize}
The supply is distributed such that $q = 1-\epsilon$ with probability $0.8$, and otherwise $q = 0.5 + \epsilon/2$. In the spot-only auction, the spot price is $5$ with supply $q=1-\epsilon$ and $10$ with supply $q=0.5+\epsilon/2$, giving revenue of $5-3\epsilon$. %
In the reservation market, the optimal reserve price is $10$, which generates a total revenue of $5 + 9\epsilon$.%

Consider the dual mechanism with reservation price $10-\epsilon$.  At equilibrium, the buyers of type $(10,10-\epsilon)$ strictly prefer to reserve (obtaining utility $\epsilon$ with probability $1$, rather than utility $\epsilon$ with probability $0.8$), whereas the buyers of type $(20,0)$ strictly prefer to participate in the spot market (obtaining utility $(20-5)$ with probability $0.8$, rather than utility $10+\epsilon$ with probability $1$).  
The revenue is then $7 - \tfrac{5}{2}\epsilon$, greater than spot-only revenue $5-3\epsilon$, and reservation-only revenue $5 + 9\epsilon$ for sufficiently small $\epsilon$.
\end{example}

\begin{lemma}
For any choice of the reserve price satisfying the assumptions above,
the expected efficiency of the reservation+spot mechanism is weakly greater than the efficiency of the reservation market with the same reservation price.
\end{lemma}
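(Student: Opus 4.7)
The plan is to compare the dual and reservation-only allocations realization-by-realization in the supply $q$. Set $T_R = 1 - F(p_r)$ to be the measure of agents with $v \geq p_r$, and let $T \leq T_R$ denote the measure of dual reservers in equilibrium. By the first assumption $q \geq T$ almost surely. A structural observation is that the dual mechanism always allocates $q$ units (via reservation plus spot-market clearing), while the reservation-only mechanism allocates only $\min(q, T_R) \leq q$.

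First I would dispatch the event $q \geq T_R$. In this case the dual spot-supply $q - T \geq T_R - T$ exceeds the measure of non-reserving $v \geq p_r$ agents, so the spot price satisfies $p_s(q) \leq p_r$; every agent with $v \geq p_r$ is allocated (reservers via reservation, non-reservers via spot since their value exceeds $p_s$), and additionally the top $q - T_R$ agents with $v < p_r$ are allocated via spot. In reservation-only exactly the $T_R$ agents with $v \geq p_r$ are allocated. Thus dual weakly dominates reservation-only on this event.

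The remaining event is $q < T_R$. Here the spot price strictly exceeds $p_r$ (spot demand from non-reserving $v \geq p_r$ agents alone exceeds supply), so the spot winners are the top $q - T$ such non-reservers by value. Dual therefore allocates exactly $q$ agents from the pool of $T_R$ agents with $v \geq p_r$: the $T$ reservers plus the top-$(q - T)$ by value among the $T_R - T$ non-reservers. Reservation-only allocates $q$ uniformly random agents from the same pool. I would show that dual's sum of values weakly exceeds the uniform-random expectation by combining Lemma~\ref{lem:unit-monotone} (monotonicity in risk aversion) with the assumption $p_r \geq \mathbf{E}[p_s]$, which together restrict which types of agents end up in the reserver set versus the non-reserver set.

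The main obstacle is the second case. Since reservers in dual are selected by risk aversion rather than value, a priori they could concentrate on low-valued agents in the pool, making dual's $q$-element selection worse than a uniform-random one. The assumption $p_r \geq \mathbf{E}[p_s]$ is what rules out this pathology: it limits how often and how severely the spot price can exceed the reservation price, which in turn constrains the set of types who prefer spot over reservation and thus the composition of the reserver pool. I expect the argument will close via a careful rearrangement of the efficiency difference that couples the realized spot price to the reservation strategies, or via a deviation/exchange argument that swaps a low-valued reserver for a higher-valued non-reserver in order to reduce to a tractable, already-dominated configuration.
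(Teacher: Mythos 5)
Your case analysis starts along the same lines as the paper's (which compares the two mechanisms realization-by-realization in $\supply$ and argues ``weakly more people are served in $\mechspotandres$, hence efficiency is greater''), and your first case $\supply \geq T_R$ is fine. The genuine gap is your second case, and you have correctly diagnosed it but not closed it --- and it cannot be closed by the route you propose. When $\supply < T_R$ both mechanisms serve exactly measure $\supply$ of agents from the pool $\{\val \geq \pricereservation\}$, so everything hinges on composition, and the two assumptions you invoke do not prevent the reserver set from concentrating on the low end of that pool. Concretely: put mass $0.4$ on $(\val,\util)=(0.51,\min(z,\delta))$ with $\delta$ tiny, mass $0.4$ on risk-neutral agents with $\val=1$, mass $0.2$ on risk-neutral agents with $\val=0.01$; let $\supply = 0.4+\eta$ w.p.\ $0.4$ and $\supply=0.9$ w.p.\ $0.6$; set $\pricereservation=0.5$. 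In the unique equilibrium the $0.51$-agents reserve (spot gives them $\delta$ only when supply is high) and the $\val=1$ agents go to spot (expected spot price $0.4\cdot 1 + 0.6\cdot 0.01 = 0.406 < \pricereservation$, so both standing assumptions hold). In the low-supply realization the dual serves essentially only the $0.51$-valued reservers, while the reservation-only lottery serves half of them and half of the $\val=1$ agents; averaging over supply, the dual's expected efficiency is about $0.445$ versus about $0.483$ for reservation-only. So no exchange or rearrangement argument can rescue the $\supply< T_R$ case: the domination you need simply fails.

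For what it is worth, the paper's own one-line proof establishes only your first case and silently assumes away exactly the compositional issue you flagged (equal head-counts do not imply equal or greater total value). Your write-up is therefore more honest about where the difficulty lies, but the honest conclusion is that the statement needs an additional hypothesis --- e.g.\ that the propensity to reserve is non-decreasing in value, or that all agents with $\val\geq\pricereservation$ reserve --- rather than a cleverer final step.
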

\begin{proof}
	Recall that the efficiency of a mechanism is the expected value generated by the agents, ignoring the welfare lost due to the nonlinear utility functions. For any realized supply $\supply$ then, weakly more people are served in $\mechspotandres$, hence efficiency is greater. 
\end{proof}

\begin{theorem} \label{thm:welfarebound}
In any equilibrium of the the spot and reservation mechanism where the reservation price is set above the expected spot price, the expected welfare of the reservation and spot mechanism is weakly greater than the expected welfare of the spot-only mechanism.
\end{theorem}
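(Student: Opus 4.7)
\medskip
\noindent\emph{Proof proposal.} The plan is to reuse the deformation technique from the preceding revenue lemma, now tracking welfare instead of revenue. Introduce a one-parameter family $\typeCDF_\lambda$ of type distributions interpolating between a risk-neutral version $\typeCDF_0$ of $\typeCDF$ (marginals over values matching $\typeCDF$, but every utility curve replaced by the identity) and $\typeCDF_1 = \typeCDF$, with risk aversion weakly increasing in $\lambda$. At $\lambda = 0$ the assumption $\pricereservation \geq \Ex{\pricespot}$ implies no risk-neutral agent strictly prefers reserving: a value-$\valagent$ agent obtains spot utility $\Ex{(\valagent - \pricespot)_+} \geq \valagent - \Ex{\pricespot} \geq \valagent - \pricereservation$. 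Hence at $\lambda = 0$ the dual equilibrium coincides with the spot-only mechanism and $\welfare(\mechspotandres) = \welfare(\mechspot)$ there; it therefore suffices to show the welfare gap $\welfare(\mechspotandres) - \welfare(\mechspot)$ is non-decreasing in $\lambda$.

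Write $\welfare = \rev + U$ with $U$ the total agent utility. Lemma~\ref{lem:moreRAmorerev} already implies $\rev(\mechspotandres) - \rev(\mechspot)$ is non-decreasing in $\lambda$, since spot-only revenue is independent of utility curves, so it suffices to control the utility side via per-agent welfare contributions. Define the welfare contribution of an agent of type $(\valagent, \utilagent)$ who wins at price $\price$ by $\phi(\price, \valagent, \utilagent) = \utilagent(\valagent - \price) + \price$ for $\price \in [0, \valagent]$. A direct computation gives $\phi''_{\price} = \utilagent''(\valagent - \price) \leq 0$ (concavity of $\phi$ in $\price$) and $\phi'_{\price} = 1 - \utilagent'(\valagent - \price) \geq 0$ (monotonicity, using $\utilagent'(0) = 1$ and concavity of $\utilagent$). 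For each reserver, Jensen's inequality applied to the concave $\phi$, combined with $\pricereservation \geq \Ex{\pricespot} \geq \Ex{\pricespot \mid \valagent > \pricespot}$, yields
\[
\phi(\pricereservation) \;\geq\; \phi\InParentheses{\Ex{\pricespot \mid \valagent > \pricespot}} \;\geq\; \Ex{\phi(\pricespot) \mid \valagent > \pricespot} \;\geq\; \Ex{\phi(\pricespot) \cdot \Ind{\valagent > \pricespot}},
\]
so every reserver's welfare contribution in $\mechspotandres$ weakly exceeds their contribution in $\mechspot$.

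The main obstacle is the non-reservers, because individual non-reservers can lose welfare in $\mechspotandres$: Lemma~\ref{lem:generalchar} combined with $\reserverCDF \geq 0$ yields that the dual spot-price CDF is pointwise no larger than the spot-only one, so non-reservers with $\pricespot(\mechspot, \supply) < \valagent \leq \pricespot(\mechspotandres, \supply)$ win in spot-only but lose in the dual. To bound the aggregate non-reserver loss, view the dual spot stage as a spot-only market among non-reservers with supply $\supply - \totalreserved$; a short calculation shows its welfare as a function of supply has derivative bounded above by the corresponding market-clearing price, since the marginal winner contributes $\utilagent(0) + p = p$ while existing winners' contributions fall as the price drops. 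Integrating yields
\[
-\Delta_N(\supply) \;\leq\; \pricespot(\mechspotandres, \supply) \cdot \InParentheses{\totalreserved - \totalreservedlow(\supply)},
\]
where $\totalreservedlow(\supply)$ is the mass of reservers with value exceeding $\pricespot(\mechspot, \supply)$ and $\Delta_N(\supply)$ is the per-realization dual-minus-spot non-reserver welfare. The technically delicate final step is to pair the displaced non-reservers with the newly-served low-value reservers (of total mass $\totalreserved - \totalreservedlow(\supply)$, who win only in the dual) via a mass-balance argument, aggregating over $\supply$ so that the reserver Jensen gains from the previous paragraph absorb the non-reserver loss.
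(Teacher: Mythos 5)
There is a genuine gap, and it sits exactly where you flag the step as ``technically delicate'': the aggregation over $\supply$ that is supposed to let the reservers' Jensen gains absorb the displaced non-reservers' loss. Your bounds give a total non-reserver loss of at most $\Ex[\supply\sim\supplyCDF]{\pricespot^{s+r}(\supply)\,(\totalreserved-\totalreservedlow(\supply))}$, while the guaranteed part of the reserver gain (the newly-served low-value reservers, who contribute $\phi(\pricereservation)\geq\pricereservation$ each) is only $\pricereservation\cdot\Ex[\supply\sim\supplyCDF]{\totalreserved-\totalreservedlow(\supply)}$. Closing the argument therefore requires $\Ex{(\pricereservation-\pricespot^{s+r}(\supply))(\totalreserved-\totalreservedlow(\supply))}\geq 0$, but the weight $\totalreserved-\totalreservedlow(\supply)$ (the mass of reservers priced out of the spot-only market) and $\pricespot^{s+r}(\supply)$ are both non-increasing in $\supply$, hence positively correlated; Chebyshev's sum inequality then pushes the weighted average of $\pricespot^{s+r}$ \emph{above} its unweighted mean, so the hypothesis $\pricereservation\geq\Ex{\pricespot^{s+r}}$ does not deliver the inequality. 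In short, the non-reserver loss is a per-realization quantity concentrated on high-spot-price realizations, while your reserver gains are certified only in expectation, and the two cannot be paired realization-by-realization. (Separately, the opening deformation in $\lambda$ is not actually used: the claim that monotonicity of the revenue gap ``reduces the problem to the utility side'' is not a valid reduction, and the rest of your argument is a direct comparison at $\lambda=1$ anyway.)

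The paper sidesteps this correlation problem with an intermediate benchmark $\mechspotandresbench$ in which every reserver pays the \emph{realized} spot price $\pricespot^{s+r}(\supply)$ rather than $\pricereservation$, and reservers with $\valagent\leq\pricespot^{s+r}(\supply)$ contribute only their payment to welfare. This splits the proof into two clean steps. First, $\welfare_{\supply}(\mechspotandresbench)\geq\welfare_{\supply}(\mechspot)$ holds \emph{pointwise} in $\supply$: high-value agents pay a higher price under a welfare function that is non-decreasing in price, and the market-clearing identities force the mass of displaced non-reservers to equal exactly the mass of low-value reservers, each side contributing at most (respectively exactly) $\pricespot^{s+r}(\supply)$ --- so no expectation over $\supply$ is needed and the correlation issue never arises. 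Second, $\Ex{\welfare_{\supply}(\mechspotandres)}\geq\Ex{\welfare_{\supply}(\mechspotandresbench)}$ isolates the substitution of $\pricereservation$ for the random $\pricespot^{s+r}(\supply)$ into a single expectation statement: revenue rises because $\pricereservation\geq\Ex{\pricespot^{s+r}}$, and each reserver's utility rises by the equilibrium condition $\utilres(\valagent,\utilagent)\geq\utilspot(\valagent,\utilagent)$ (your Jensen computation is a special case of this, but the equilibrium inequality is what is actually needed, since it holds for arbitrary concave $\utilagent$ without further calculation). To repair your proof you would need to introduce something playing the role of this benchmark; the direct reserver-versus-non-reserver pairing as written does not close.
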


\begin{proof}[sketch]
Note that, relative to a spot market, introducing a reservation price adds inefficiency.  This is because if a bidder is willing to reserve to get a guaranteed instance, any time they would not have one in the spot market, there is a higher valued bidder than them who could be allocated.
	
However, welfare is increased when a bidder chooses to reserve. Consider an agent with value $\valagent$ who is willing to reserve at price $\pricereservation$. Reserving increases his utility and the auctioneer is receiving more revenue, because the reservation price is greater than the expected spot price (by assumption). 

The full proof consists of three parts. First, we define a benchmark $\mechspotandresbench$ that is just like $\mechspotandres$ except agents who reserve pay the spot price instead of the reservation price. We then show that the welfare of $\mechspotandres$ is greater than the welfare of $\mechspotandresbench$, which follows largely because the expected reservation price is greater than the expected spot price. Finally, we show that spot prices increase when agents choose to reserve, which leads to the welfare of $\mechspotandresbench$ being greater than the welfare of $\mechspot$, and hence the welfare of $\mechspotandres$ is greater than $\mechspot$.
\end{proof}

\bibliography{bibs}{}
\bibliographystyle{apalike}

\appendix
\section{Appendix: Proofs}

\subsection{Proof of Lemma~\ref{lem:moreRAmorerev}}

\begin{proof}
Let $F$ be the original distribution over types, and let $\alt{F}$ denote a distribution in which the risk aversion of agents increases, for each fixed value $v$. %
Let $\alt{T}$ and $\alt{S}$ refer to the volume of reserved instances and distribution of spot prices under the alternate distribution $\alt{F}$.  As before, $\indifferencebudget$, $T$, and $S$ correspond to the respective quantities for the original distribution $F$.

We will show that for any price $\price\geq \pricereservation$, it is the case that $\alt{T}(\price) \geq T(\price)$ and $\alt{S}(\price) \leq S(\price)$.  To see how this implies the lemma, note that these inequalities imply that the increase in risk aversion of agents results in (a) more agents reserving, and (b) higher spot prices.  Because the reservation price is greater than the expected spot price, the now-reserving agents pay more than they were paying before and total revenue increases.

We will prove the desired inequalities by induction on $\price \in V$.  Note first that for any $\price < \pricereservation$, an agent with value $\price$ will certainly choose to purchase in the spot market regardless of their risk attitudes, and hence $\alt{T}(\price) = T(\price)$ and $\alt{S}(\price) = S(\price)$.  For any $\price \geq \pricereservation$, recall that
\[ T(p) = \sum_{\substack{v \in V\\v \leq p}} \sum_{u \in U} f(v,u) \cdot \mathbbm{1}\left[ u(v - p_r) \geq \Ex[p \sim S]{u(\max\{v-p, 0\})} \right] \]
and
\[ \alt{T}(p) = \sum_{\substack{v \in V\\v \leq p}} \sum_{u \in U} \alt{f}(v,u) \cdot \mathbbm{1}\left[ u(v - p_r) \geq \Ex[p \sim \alt{S}]{u(\max\{v-p, 0\})} \right]. \]
As in Lemma~\ref{lem:unique}, the dependence on $S$ and $\alt{S}$, respectively, is limited to values in $V \cap [0,v)$.  Since we assume inductively that $S(p) \geq \alt{S}(p)$ for all $p \in V \cap [0,v)$, the distribution over spot market utilities under $\alt{F}$ is stochastically dominated by the distribution under $F$, for agents with value $v$.  Since $\alt{F}$ additionally increases risk aversion relative to $F$, fixing $v$, we conclude that weakly fewer agents with value $v$ would choose to reserve under $\alt{F}$ relative to $F$.  That is, $\alt{T}(\price) \geq T(\price)$.  Then, since
\[ \spotCDF(\price) = 1-\supplyCDF\left(1-\typeCDF(\price) + \reserverCDF(\price)\right) \]
and
\[ \alt{\spotCDF}(\price) = 1-\supplyCDF\left(1-\typeCDF(\price) + \alt{\reserverCDF}(\price)\right) \]
we can immediately conclude $\alt{S}(\price) \leq S(\price)$ as well.  The desired result then follows by induction.
\end{proof}

\subsection{Proof of Theorem~\ref{thm:welfarebound}}

We now proceed with the proof of Theorem \ref{thm:welfarebound}: that the welfare of the spot+reservation mechanism is greater than the welfare of the spot-only mechanism.
Let $\welfare_{\supply}(\mech)$ be the welfare of mechanism $\mech$ at fixed supply $\supply$. Then we can write the total welfare of $\mech$ as
\begin{equation}
\welfare(\mech) = \Ex[\supply\sim\supplyCDF]{\welfare_{\supply}(\mech)}
\end{equation}

Write $\welfarei^{\vali}(\price)$ for the welfare generated by an agent paying price $\price$, hence $\welfarei^{\vali}(\price) = \utili(\vali - \price) + \price$. Note that $\welfarei(\price)$ is monotone non-decreasing in $\price$, since $u_i$ is a concave non-decreasing function with $u_i'(0)=1$.  %

Let $\pricespot^{s}(\supply)$ be the price in the spot-only mechanism $\mechspot$ if the realization of supply is $\supply$, and $\pricespot^{s+r}(\supply)$ the spot price in spot and reservation mechanism $\mechspotandres$. Note that $\pricespot^{s+r}(\supply)\geq \pricespot^{s}(\supply)$ for all $\supply$, no matter how the buyers behave. This is because $\pricespot^{s}(\supply)$ is precisely the minimal price at which a $\supply$ fraction of the buyers will purchase, and hence if $\pricespot^{s+r}(\supply) < \pricespot^{s}(\supply)$ then more than a $\supply$ fraction of buyers must be purchasing in $\mechspotandres$ (since they would buy in the spot market if they didn't reserve), which is impossible.

We now analyze the welfare of $\mechspotandres$, breaking down the welfare generated by those bidders who buy in the spot market and who reserve. Let $\fracreservedatprice(v)$ be the fraction of bidders with value $v$ who reserve.  Then
\begin{align}
\welfare_{\supply}(\mechspotandres) =& \sum_{v\leq \pricespot^{s+r}(\supply)} \welfarei^{v}(\pricereservation) \fracreservedatprice(v) \typePDF(v)\\
+& \sum_{v > \pricespot^{s+r}(\supply) } \left( \welfarei^{v}(\pricespot^{s+r}(\supply)) (1-\fracreservedatprice(\val))+\welfarei^{v}(\pricereservation) \fracreservedatprice(\val) \right) \typePDF(v).
\end{align}

Note the split between the two summations at $\pricespot^{s+r}(\supply)$. Agents with values below $\pricespot^{s+r}(\supply)$ are served only if they reserve, whereas agents with values above $\pricespot^{s+r}(\supply)$ are served whether or not they reserve --- the only question for their welfare is whether they pay $\pricereservation$ or $\pricespot^{s+r}$. Call each of these summations $\welfare_{\supply}^{-}(\mechspotandres)$ and $\welfare_{\supply}^{+}(\mechspotandres)$ respectively.

We now consider the welfare from a benchmark, $\mechspotandresbench$. For a given supply $q$, if agents reserve and have values above $\pricespot^{s+r}(\supply)$, we assume that they pay the spot price $\pricespot^{s+r}(\supply)$ instead of the reservation price $\pricereservation$. For agents who reserve with values below the spot price $\pricespot^{s+r}(\supply)$, we assume they pay the spot price $\pricespot^{s+r}(\supply)$ and (magically) get no utility or disutility from doing so, hence the only welfare generated is the welfare the designer experiences from the payment: $\welfarei^{z}(\pricespot^{s+r}(\supply)) = \pricespot^{s+r}(\supply)$. 

As we did for the combined mechanism, we can write the welfare of the benchmark as 
$$\welfare_{\supply}(\mechspotandresbench) = \welfare_{\supply}^{-}(\mechspotandresbench) + \welfare_{\supply}^{+}(\mechspotandresbench),$$ where $\welfare_{\supply}^{-}(\mechspotandresbench)$ denotes the contribution to welfare from bidders with values below $\pricespot^{s+r}(\supply)$, and $\welfare_{\supply}^{-}(\mechspotandresbench)$ is the contribution to welfare from bidders with values above $\pricespot^{s+r}(\supply)$.  For bidders with values below the spot price, we have
\begin{align*}
\welfare_{\supply}^{-}(\mechspotandresbench) =& \sum_{\val \leq \pricespot^{s+r}(\supply)} \pricespot^{s+r}(\supply) \fracreservedatprice(\val) \typePDF(\val)\\
=& \pricespot^{s+r}(\supply) \totalreserved(\pricespot^{s+r}(\supply))
\end{align*}

The last line followed because $\totalreserved(\pricespot^{s+r}(\supply))$ is exactly the volume of agents with value at most $\pricespot^{s+r}(\supply)$, hence $\totalreserved(\pricespot^{s+r}(\supply)) =  \sum_{v\leq \pricespot^{s+r}(\supply)} \fracreservedatprice(v) \typePDF(v)$. For the bidders with values above the spot price, the welfare satisfies
\begin{align*}
\welfare_{\supply}^{+}(\mechspotandresbench) =& \sum_{v>\pricespot^{s+r}(\supply)} \welfarei^{v}(\pricespot^{s+r}(\supply))\typePDF(v).
\end{align*}

Whether or not the benchmark welfare for agents with values above the spot price is above or below the actual welfare depends on whether the spot price is above or below the reservation price.  But, as the following claim shows, the welfare benchmark will be less in expectation than the actual welfare: %

\begin{claim}
\label{claim:1}
\begin{equation}
\Ex[\supply\sim\supplyCDF]{\welfare_{\supply}(\mechspotandres)} \geq \Ex[\supply\sim\supplyCDF]{\welfare_{\supply}(\mechspotandresbench)}
\end{equation}
\end{claim}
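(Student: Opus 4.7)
The plan is to decompose the welfare difference bidder-by-bidder and show that each reserving type contributes non-negatively in expectation. The key observation is that $\mechspotandresbench$ differs from $\mechspotandres$ only in the payment accounting for reserving bidders: the reservation strategies, the realized allocation, and the realized spot price $\pricespot^{s+r}(\supply)$ are all identical across the two mechanisms. Hence every non-reserving bidder contributes identically to the two welfares, and the expected welfare gap reduces to a sum over reserving types of per-bidder expected differences.

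For a reserving type $(v,u)$, the per-supply contribution to $\welfare_{\supply}(\mechspotandres)$ is $u(v-\pricereservation)+\pricereservation$, independent of $\supply$, while the contribution to $\welfare_{\supply}(\mechspotandresbench)$ can be written uniformly as $u(\max\{v-\pricespot^{s+r}(\supply),0\})+\pricespot^{s+r}(\supply)$. When $v>\pricespot^{s+r}(\supply)$ this matches $\welfarei^{v}(\pricespot^{s+r}(\supply))$; when $v\leq \pricespot^{s+r}(\supply)$, the benchmark's stipulated welfare of just $\pricespot^{s+r}(\supply)$ also agrees, because $u(\max\{v-\pricespot^{s+r}(\supply),0\})=u(0)=0$. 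Taking expectations over $\supply$ and regrouping, the per-type expected welfare difference equals
\begin{equation}
\Bigl(u(v-\pricereservation)-\Ex[\supply\sim\supplyCDF]{u(\max\{v-\pricespot^{s+r}(\supply),0\})}\Bigr)+\Bigl(\pricereservation-\Ex[\supply\sim\supplyCDF]{\pricespot^{s+r}(\supply)}\Bigr).
\end{equation}

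The first parenthesized term is non-negative by the equilibrium characterization of Lemma~\ref{lem:generalchar}: the very fact that type $(v,u)$ reserves means exactly that the reservation utility weakly dominates the expected spot-market utility. The second is non-negative by the standing assumption that $\pricereservation$ is at least the expected spot price. Summing these non-negative expected contributions over all reserving types yields the claim. The only subtle point is the uniform rewriting of the benchmark welfare for reserving bidders with $v\leq \pricespot^{s+r}(\supply)$ via $u(\max\{\cdot,0\})$, which is what permits the equilibrium indifference inequality to be invoked directly; beyond this, no significant technical obstacle arises.
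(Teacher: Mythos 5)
Your proof is correct and follows essentially the same route as the paper's: the paper also splits the welfare gap for each reserving agent into a revenue piece (bounded by the assumption $\pricereservation \geq \Ex[\supply\sim\supplyCDF]{\pricespot^{s+r}(\supply)}$) and a utility piece (bounded by the equilibrium inequality $\utilres(\valagent,\utilagent) \geq \utilspot(\valagent,\utilagent)$), with non-reserving agents contributing identically. Your explicit per-type rewriting of the benchmark welfare via $u(\max\{v-\pricespot^{s+r}(\supply),0\})$ is just a more detailed rendering of the same argument.
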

\begin{proof}
The expected payment from every agent who reserves in $\mechspotandresbench$ is $\Ex[\supply\sim\supplyCDF]{\pricespot^{s+r}(\supply)}$, which by assumption satisfies $\Ex[\supply\sim\supplyCDF]{\pricespot^{s+r}(\supply)}\leq \pricereservation$. Thus the total revenue in the mechanism $\mechspotandres$ is greater than in the benchmark $\mechspotandresbench$.

For all agents who do reserve, we know their utility from reserving is more than their utility if they had not reserved and only participated in the spot market, $\utilres(\valagent, \utilagent) \geq \utilspot(\valagent, \utilagent)$. The utilities of all agents who do not reserve are the same in both, so they are indifferent.

Summing the utilities of all the agents and the revenue of the designer gives 
\begin{equation*}
\Ex[\supply\sim\supplyCDF]{\welfare_{\supply}(\mechspotandres)} \geq \Ex[\supply\sim\supplyCDF]{\welfare_{\supply}(\mechspotandresbench)}.
\end{equation*}
\end{proof}

We now argue that the welfare of the benchmark is an upper bound on the welfare from the spot-only mechanism.

\begin{claim}
\label{claim:2}
\begin{equation}
\welfare_{\supply}(\mechspotandresbench)\geq \welfare_{\supply}(\mechspot) 
\end{equation}
\end{claim}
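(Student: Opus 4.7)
The plan is to expand $\welfare_{\supply}$ for each of the two mechanisms as (revenue) plus (sum of agent utilities), then use two standard consequences of each $\utilagent$ being concave with $\utilagent(0)=0$ and $\utilagent'(0)=1$ --- namely, that $\utilagent$ is $1$-Lipschitz on $[0,\infty)$ and that $\utilagent(x)\le x$ --- together with the market-clearing identities to cancel the intermediate terms. No new concepts beyond what is already set up in the paper are needed; the argument is entirely a bookkeeping exercise on closed-form welfare expressions.

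First, fix the realized supply $\supply$ and set $p_1=\pricespot^{s}(\supply)$ and $p_2=\pricespot^{s+r}(\supply)$; the inequality $p_2\ge p_1$ is already established. For the spot-only mechanism, every agent with $\valagent>p_1$ is served at price $p_1$, so together with the spot clearing condition $\supply=1-\valCDF(p_1)$ we get $\welfare_{\supply}(\mechspot)=p_1\supply+\sum_{\valagent>p_1}\utilagent(\valagent-p_1)\,\typePDF(\valagent)$. For the benchmark every served agent pays $p_2$: agents with $\valagent>p_2$ contribute $\utilagent(\valagent-p_2)+p_2$, while reservers with $\valagent\le p_2$ contribute exactly $p_2$ because their ``magic'' utility is $0$. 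The supply-demand balance in the combined mechanism gives $\supply=\Pr[\valagent>p_2]+\totalreserved(p_2)$ (total served equals non-reservers above the spot price plus reservers below it, because reservers above $p_2$ merely substitute for themselves as spot buyers), which simplifies the benchmark welfare to $\welfare_{\supply}(\mechspotandresbench)=p_2\supply+\sum_{\valagent>p_2}\utilagent(\valagent-p_2)\,\typePDF(\valagent)$.

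Next, subtract the two expressions and invoke the two inequalities: for $\valagent>p_2$ the $1$-Lipschitz property gives $\utilagent(\valagent-p_2)-\utilagent(\valagent-p_1)\ge-(p_2-p_1)$, and for $p_1<\valagent\le p_2$ the bound $\utilagent(\valagent-p_1)\le \valagent-p_1$ applies. Combined with the identity $\supply=\Pr[\valagent>p_1]=\Pr[\valagent>p_2]+\Pr[p_1<\valagent\le p_2]$, all $\Pr[\valagent>p_2]$ terms cancel, and the lower bound collapses to $\welfare_{\supply}(\mechspotandresbench)-\welfare_{\supply}(\mechspot)\ge\sum_{p_1<\valagent\le p_2}(p_2-\valagent)\,\typePDF(\valagent)\ge 0$.

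The delicate step, and really the only place where the argument could go wrong, is accounting for the marginal bidders with values in $(p_1,p_2]$: they are served in $\mechspot$ but may go unserved in $\mechspotandresbench$. The calculation above shows that their forfeited welfare (at most $\valagent\le p_2$ per unit mass) is dominated by the extra $p_2-p_1$ per unit mass of revenue the benchmark collects on the reservers in the same range; the $1$-Lipschitz property then controls the corresponding loss on higher-value bidders. Once this bookkeeping is in place, the rest is mechanical algebra.
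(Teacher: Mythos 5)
Your proof is correct and follows essentially the same route as the paper's: the closed-form expressions you derive via the two market-clearing identities are exactly the paper's $\welfare_{\supply}^{+}$ and $\welfare_{\supply}^{-}$ decomposition, your $1$-Lipschitz bound is the paper's monotonicity of $\welfarei^{\vali}(\price)=\utili(\vali-\price)+\price$ in $\price$, and your treatment of the marginal bidders in $(p_1,p_2]$ via $\utilagent(x)\le x$ and the identity $\valCDF(p_2)-\valCDF(p_1)=\totalreserved(p_2)$ is the paper's bound on $\welfare_{\supply}^{-}(\mechspotandresbench)$. The only difference is bookkeeping (a single subtraction of totals rather than a groupwise comparison), so nothing further is needed.
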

\begin{proof}

The agents who purchase in $\mechspot$ are precisely those with values above $\pricespot^{s}(\supply)$, the spot price for the spot-only mechanism.  Consider separately the agents with values above and below the spot price $\pricespot^{s+r}(\supply)$ for the combined mechanism. 

Agents with values above $\pricespot^{s+r}(\supply)$ are always allocated in the benchmark and the spot-only mechanism. As the spot price (and hence benchmark payment) is higher in the spot+reservation mechanism than the spot-only mechanism for a given supply, and $\welfarei$ is non-decreasing in $\price$, $\welfarei(\pricespot^{s+r}(\supply))\geq \welfarei(\pricespot^{s}(\supply))$, thus
\begin{align}
\sum_{v > \pricespot^{s+r}(\supply) } \welfarei^{v}(\pricespot^{s}(\supply)) \typePDF(v) &\leq \sum_{v> \pricespot^{s+r}(\supply)}  \welfarei^{v}\pricespot^{s+r}(\supply) \typePDF(v) \nonumber\\
&= \welfare_{\supply}^{+}(\mechspotandresbench). \label{eq:valabove} 
\end{align}

Consider now agents with values below $\pricespot^{s+r}(\supply)$. If the agent does not reserve in the spot+reservation mechanism, then we know that the price paid in the benchmark is higher than the welfare generated from receiving the item: $\valagent \leq \pricespot^{s+r}(\supply)$, hence we know that $\welfarei^{z}(\pricespot^{s}(\supply)) \leq \pricespot^{s+r}(\supply)$. Recall that  $\totalreserved(\pricespot^{s+r}(\supply))$ is the volume of agents with values below $\pricespot^{s+r}(\supply)$ who reserve. Thus, 
\begin{align}
\sum_{v | \pricespot^{s}(\supply) < v \leq \pricespot^{s+r}(\supply)} \welfarei^{z}(\pricespot^{s}(\supply)) \typePDF(v) &\leq \sum_{v | \pricespot^{s}(\supply) < v \leq \pricespot(\supply)} \pricespot^{s+r}(\supply) \typePDF(v)\\
&= \pricespot^{s+r}(\supply) (\spotCDF(\pricespot^{s+r}(\supply)) -\spotCDF(\pricespot^{s}(\supply)) )  \nonumber\\
&\leq \pricespot^{s+r}(\supply) \totalreserved(\pricespot^{s+r}(\supply))\\
&= \welfare_{\supply}^{-}(\mechspotandresbench). \label{eq:valunder}
\end{align}

Combining equations \eqref{eq:valunder} and \eqref{eq:valabove} gives:
\begin{align*}
\welfare_{\supply}(\mechspot) &\leq \welfare_{\supply}^{+}(\mechspotandresbench)  + \welfare_{\supply}^{-}(\mechspotandresbench)\\
	&=\welfare_{\supply}(\mechspotandresbench),
\end{align*}
our desired result.
\end{proof}

We can now combine the bounds from Claim \ref{claim:1} and Claim \ref{claim:2} to show that the welfare of the spot and reservation mechanism $\mechspotandres$ is greater than the welfare of the spot-only mechanism, $\mechspot$, completing the proof of Theorem \ref{thm:welfarebound}.

\begin{proof}[of Theorem~\ref{thm:welfarebound}]
Taking expectation over the supply and using the claims above, we have%
\begin{align*}
\welfare(\mechspot) \leq \welfare(\mechspotandresbench) \leq \welfare(\mechspotandres)
\end{align*}
as desired.
\end{proof}

\end{document}